\let\oldbibliography\thebibliography
\renewcommand{\thebibliography}[1]{%
  \oldbibliography{#1}%
  \setlength{\itemsep}{1.25pt}%
  \setlength{\baselineskip}{8.25pt}
  \setlength{\lineskiplimit}{-\maxdimen}
}
\newcommand{\T}{{\scriptscriptstyle\mathsf{T}}}
\renewcommand{\H}{{\scriptscriptstyle\mathsf{H}}}
\newsavebox{\foobox}
\definecolor{kugray5}{RGB}{224,224,224}
\newcommand\rsout{\bgroup\markoverwith
	{\textcolor{red}{\rule[0.5ex]{2pt}{0.8pt}}}\ULon}
\newcommand{\ALOOP}[1]{\ALC@it\algorithmicloop\ #1%
	\begin{ALC@loop}}
	\newcommand{\ENDALOOP}{\end{ALC@loop}\ALC@it\algorithmicendloop}
\let\mybibitem\bibitem
\renewcommand{\bibitem}[1]{%
	\ifstrequal{#1}{nature}
	{\color{blue}\mybibitem{#1}}
	{\color{black}\mybibitem{#1}}%
}
\newtheorem{theorem}{\textbf{Theorem}}
\newtheorem{proof}{Proof}
\newcommand{\epr}{\hfill\(\Box\)}
\newcommand\nbthis{\addtocounter{equation}{1}\tag{\theequation}}
\newcommand{\norm}[1]{\left\lVert#1\right\rVert} 
\newcommand{\fronorm}[1]{\left\lVert#1\right\rVert_{\mathcal{F}}} 
\newcommand{\abs}[1]{\left|#1\right|} 
\newcommand{\tr}[1]{\mathrm{tr}(#1)} 
\newcommand{\trt}[1]{\mathrm{tr}\left(#1\right)} 
\newcommand{\mQ}{{\mathbf{Q}}}
\newcommand{\mH}{{\mathbf{H}}} 
\newcommand{\mA}{{\mathbf{A}}}
\newcommand{\mI}{\mathbf{I}}
\newcommand{\mD}{{\mathbf{D}}}
\newcommand{\mX}{{\mathbf{X}}}
\newcommand{\mG}{{\mathbf{G}}}
\newcommand{\mU}{{\mathbf{U}}}
\newcommand{\mV}{{\mathbf{V}}}
\newcommand{\mZ}{{\mathbf{Z}}}
\newcommand{\setC}{\mathbb{C}} 
\newcommand{\vs}{{\mathbf{s}}}
\newcommand{\vu}{{\mathbf{u}}}
\newcommand{\vh}{{\mathbf{h}}}
\newcommand{\va}{{\mathbf{a}}}
\newcommand{\vd}{{\mathbf{d}}}
\def\argmin{\mathop{\mathrm{argmin}}}
\def\b0{{\pmb{0}}}
\newcommand{\Pt}{P_{\mathtt{BS}}}
\newcommand{\noise}{\sigma_{\mathrm{n}}^2}
\begin{document}
	\setlength{\abovedisplayskip}{3.0pt}
	\setlength{\belowdisplayskip}{3.0pt}
	\title{Deep Unfolding-Empowered MmWave Massive MIMO Joint Communications and Sensing}

        \author{\IEEEauthorblockN{Nhan~Thanh~Nguyen\IEEEauthorrefmark{1},\! Ly~V.~Nguyen\!\IEEEauthorrefmark{2},\! Nir~Shlezinger\IEEEauthorrefmark{3},\! Yonina~C.~Eldar\IEEEauthorrefmark{4},\! A.~Lee~Swindlehurst\IEEEauthorrefmark{2},\! and Markku~Juntti\IEEEauthorrefmark{1}}
		\IEEEauthorblockA{\IEEEauthorrefmark{1}Centre for Wireless Communications, University of Oulu, P.O.Box 4500, FI-90014, Finland}
		\IEEEauthorblockA{\IEEEauthorrefmark{2}Department of EECS, University of California, Irvine, CA, USA}
		\IEEEauthorblockA{\IEEEauthorrefmark{3}School of ECE, Ben-Gurion University of the Negev, Beer-Sheva, Israel}
            \IEEEauthorblockA{\IEEEauthorrefmark{4}Faculty of Math and CS, Weizmann Institute of Science, Rehovot, Israel}
		
		Emails: \{nhan.nguyen, markku.juntti\}@oulu.fi; \{vanln1, swindle\}@uci.edu, nirshl@bgu.ac.il; yonina.eldar@weizmann.ac.il}
  
  
	
	\maketitle
	
	\begin{abstract}
		In this paper, we propose a low-complexity and fast hybrid beamforming design for joint communications and sensing (JCAS) based on deep unfolding. We first derive closed-form expressions for the gradients of the communications sum rate and sensing beampattern error with respect to the analog and digital precoders. Building on this, we develop a deep neural network as an unfolded version of the projected gradient ascent algorithm, which we refer to as UPGANet. This approach efficiently optimizes the communication-sensing performance tradeoff with fast convergence, enabled by the learned step sizes. UPGANet preserves the interpretability and flexibility of the conventional PGA optimizer while enhancing performance through data training. Our simulations show that UPGANet achieves up to a $33.5\%$ higher communications sum rate and 2.5 dB lower beampattern error compared to conventional designs based on successive convex approximation and Riemannian manifold optimization. Additionally, it reduces runtime and computational complexity by up to $65\%$ compared to PGA without unfolding.
		
	\end{abstract}
	
	\begin{IEEEkeywords}
		Joint communications and sensing, deep unfolding, hybrid beamforming.
	\end{IEEEkeywords}
	\IEEEpeerreviewmaketitle
	
	\section{Introduction}

 	In addition to connectivity, future wireless systems are expected to provide sensing and cognition capabilities~\cite{giordani2020toward}. These emerging systems are often referred to as joint communications and sensing (JCAS)~\cite{zhang2018multibeam}. To generate directional beams and to cope with the severe propagation loss in high-frequency channels, wireless base stations (BSs) will employ large-scale  multiple-input multiple-output (MIMO) arrays, typically implemented via hybrid beamforming (HBF) architectures to meet cost, power, and size constraints~\cite{molisch2017hybrid}.
 	
 	HBF designs for JCAS have gained increasing attention recently. The works in~\cite{qi2022hybrid, wang2022partially, liyanaarachchi2021joint, barneto2021beamformer, cheng2022QoS} optimize radar performance under communications constraints. The approaches in~\cite{qi2022hybrid, cheng2022QoS} minimized the mean squared error (MSE) between the transmit beampattern and a desired pattern, subject to communications signal-to-interference-plus-noise ratio (SINR) and data rate constraints. The studies in~\cite{cheng2021hybrid_narrow} and~\cite{wang2022HBD_OFDM} adopted a different design perspective, optimizing communications performance under radar constraints. To balance radar and communications performance, (weighted) sums of the performance metrics of the two functions are considered in \cite{islam2022integrated, Elbir2021HB_THz, liu2019hybrid, Kaushik2021Hardware,Kaushik2022Green, Nguyen_ISAC_TSP, Nguyen_ML_ISAC_JSTSP}. Specifically, Islam \textit{et al.} aimed to maximize the sum of the communications and radar signal-to-noise ratios (SNRs), while Cheng \textit{et al.}~\cite{cheng2021hybrid} focused on a weighted sum of the communications rate and radar beampattern matching error. In~\cite{Elbir2021HB_THz, liu2019hybrid}, the tradeoff between unconstrained communications and desired radar beamformers is optimized. Kaushik \textit{et al.}~\cite{Kaushik2021Hardware, Kaushik2022Green} addressed RF chain selection to maximize energy efficiency. While these works considered HBF designs for JCAS, they relied on iterative methods that are slow and complex.
 	
 	Data-driven machine learning has emerged as an efficient approach to avoid the high complexity of conventional iterative procedures~\cite{shlezinger2023AI}. In~\cite{Mateos2022EndtoEnd} and~\cite{muth2023Autoencoder}, beamformers, target detection mapping, and receiver processing were jointly performed in a deep end-to-end autoencoder model, focusing on fully digital MIMO with a single receiver. Xu \textit{et al.}~\cite{xu2022deep} utilized deep reinforcement learning to design sparse transmit arrays with quantized phase shifters for HBF with a single RF chain, supporting a single user while operating in either radar or communications mode. Elbir \textit{et al.}~\cite{elbir2021terahertz} trained two convolutional neural networks to estimate the direction of radar targets using a partially connected HBF architecture.
 	
 	However, the previous JCAS HBF designs relied on black-box deep learning architectures, which lack interpretability and require extensive training with large datasets. In contrast, we propose in this paper a model-driven DNN for HBF design in mmWave massive MIMO JCAS systems, based on unfolding the projected gradient ascent (PGA) algorithm. We refer to our approach as UPGANet. Specifically, we first formulate the design objective to balance the tradeoff between communication rate and beampattern quality, and derive closed-form gradients of this objective with respect to (w.r.t.) the analog and digital precoders. Noticing that these gradients exhibit significant imbalances in magnitude, we then develop a modified iterative solver based on the PGA framework to ensure smooth convergence. Finally, we transform this solver into UPGANet by incorporating PGA iterations into layers of a DNN. Numerical studies show that UPGANet efficiently generates hybrid precoders that provide significant improvements in both communications and sensing performance over conventional methods. Moreover, optimizing the step sizes of the PGA algorithm through data training accelerates the convergence of DU-PGA, contributing to its low complexity and run time.

        \section{Signal Model and Problem Formulation}
	\label{sec_system_model}

	
	\subsection{Signal Model}
	We consider a mono-static massive MIMO JCAS system, wherein a BS composed of a uniform linear array simultaneously transmits data to $K$ single-antenna communications users (UEs) and radar signals with a desired beampattern. The BS employs a  fully connected HBF architecture implemented with phase shifters \cite{nguyen2023deep, yu2016alternating}. Let $N$ and $M$ denote the numbers of antennas and RF chains, and let $\mA \in \mathbb{C}^{N \times M}$ and $\mD = [\vd_1, \ldots, \vd_K] \in \mathbb{C}^{M \times K}$ be the analog and digital precoders at the BS, with power constraint $\fronorm{\mA \mD}^2 = \Pt$. Furthermore, denote by $\vs = [s_1, \ldots, s_K] \in {\mathbb{C}}^{K \times 1}$ the data vector transmitted by the BS. Then, the received signal at UE $k$ is given by 
	\begin{align*}
		&y_k = \vh_k^\H \mA \vd_k s_k + \vh_k^\H \sum\nolimits_{k' \neq k}^K \mA \vd_{k'} s_{k'} + n_k, \nbthis \label{processed_received_signal}
	\end{align*}
	where $n_k \sim \mathcal{CN}(0,\noise)$ is  additive white Gaussian noise, and $\vh_k \in \mathbb{C}^{N \times 1}$ is the channel vector from the BS to UE $k$. The mmWave channels $\vh_k$, $k = 1,\ldots,K$, are modeled using the extended Saleh-Valenzuela model \cite{sohrabi2016hybrid, nguyen2019unequally, Nguyen_ISAC_TSP}. We omit details for the channel model due to space constraints.
	
	
	\subsection{Problem Formulation} 
	
	Assume that symbol $s_k$ and digital precoding vector $\vd_k$ are intended for UE $k$. Based on \eqref{processed_received_signal}, the achievable sum rate over all the UEs is given as
	\begin{align*}
		&R = \sum_{k = 1}^K \log_2 \left( 1 + \frac{\abs{\vh_k^\H \mA \vd_k}^2 }{\sum_{k' \neq k}^K \abs{\vh_k^\H \mA \vd_{k'}}^2 + \noise} \right). \nbthis \label{eq_rate}
	\end{align*}
	On the other hand, the quality of the beampattern formed by the hybrid precoders $\{\mA, \mD\}$ can be measured by \cite{liu2018mu}
	\begin{align*}
		\tau &\triangleq  \fronorm{ \mA \mD \mD^\H \mA^\H - \bm{\Psi} }^2 , \nbthis \label{eq_tau}
	\end{align*}
	where $\bm{\Psi} \in \setC^{N \times N}$ is the benchmark signal covariance matrix satisfying \cite{liu2018mu, liu2018toward}:
	\begin{align*}
		\bm{\Psi} = \underset{\substack{ \alpha, \bm{\Psi} \in \mathcal{S} }}{\argmin} \sum\nolimits_{t=1}^{T} \abs{\alpha B_{\mathrm{d}}(\theta_t) - \bar{\va}(\theta_t)^\H  \bm{\Psi}  \bar{\va}(\theta_t)}^2. \nbthis \label{obj_func_beampattern}
	\end{align*}
	Here, $\{\theta_t\}_{t=1}^T$ defines a fine angular grid of $T$ angles that covers the detection range {$[-90^\circ, 90^\circ]$}, {$B_{\mathrm{d}}(\theta_t)$ is the desired beampattern at $\theta_t$}, $\bar{\va}(\theta_t) = [1,e^{j \pi \sin(\theta_t) }, \dots,e^{j (N-1) \pi \sin(\theta_t)}]$ is the steering vector of the transmit array, and $\alpha$ is a scaling factor \cite{liu2018mu}. The feasible space of $\bm{\Psi}$ is $\mathcal{S} = \{\bm{\Psi}:[\bm{\Psi}]_{n,n} = \frac{\Pt}{N}, \forall n, \bm{\Psi} \succeq \bm{0}, \bm{\Psi} = \bm{\Psi}^\H\}$, ensuring that the waveform transmitted by different antennas has the same average transmit power, and $\bm{\Psi}$ is positive semi-definite \cite{liu2018mu}. 
	

	
	
	We aim to design the hybrid precoders $\{\mA, \mD\}$ to optimize the communications--sensing performance tradeoff, which is formulated as:
	\begin{subequations}
		\label{opt_prob}
		\begin{align*} 
			\underset{\substack{ \mA, \mD }}{\textrm{maximize}} \quad & R - \omega \tau \nbthis \label{obj_func} \\
			\textrm{subject to} \quad
			&\abs{[\mA]_{nm}} = 1, \forall n, m, \nbthis \label{cons_analog}\\
			& \fronorm{\mA \mD}^2 = \Pt, \nbthis \label{cons_power}
		\end{align*}
	\end{subequations}
	where constraint \eqref{cons_analog} enforces the unit modulus of the analog precoding coefficients, \eqref{cons_power} is the power constraint, and $[\cdot]_{ij}$ denotes the $(i,j)$-th entry of a matrix. Problem \eqref{opt_prob} is nonconvex and therefore challenging to solve. Specifically, it inherits the constant-modulus constraints of HBF transceiver design \cite{nguyen2019unequally, yu2016alternating, sohrabi2016hybrid} and the strong coupling between the design variables $\mA$ and $\mD$ in \eqref{obj_func} and  \eqref{cons_power}.

	\section{Proposed Design}
	\label{sec_proposed_scheme}
	To address \eqref{opt_prob}, we first present the PGA framework to solve $\mA$ and $\mD$ in an iterative manner. We then propose the UPGANet to accelerate the convergence as well as to improve the performance of the PGA method by leveraging data to cope with the non-convex nature of the problem.  
	
	\subsection{Proposed PGA Optimization Framework}
	\label{sec_PGA}

	We propose leveraging the PGA method in combination with alternating optimization (AO) to solve \eqref{opt_prob}. Specifically, in each iteration, $\mA$ and $\mD$ are found via AO, i.e., one is solved while the other is kept fixed. The solutions to $\mA$ and $\mD$ are then projected onto the feasible space defined by \eqref{cons_analog} and \eqref{cons_power} via normalization. Specifically, for a fixed $\mD$, $\mA$ can be updated at  iteration $i+1$ as: 
	\begin{align*}
		&\mA_{(i+1)} = \mA_{(i)} + \mu_{(i)} \left(\nabla_{\mA} R  - \omega \nabla_{\mA} \tau\right) \Big|_{\mA = \mA_{(i)}}, \nbthis \label{eq_F_PGA} \\
		&[\mA_{(i+1)}]_{nm} = \frac{[\mA_{(i+1)}]_{nm}}{\abs{[\mA_{(i+1)}]_{nm}}},\ \forall n, m, \nbthis \label{eq_projection_F}
	\end{align*}
	where $\nabla_{\mX} f$ is the gradient of function $f$ w.r.t. a complex matrix $\mX$. Similarly, given $\mA$,  $\mD$ can be updated as:
	\begin{align*}
		\mD_{(i+1)} &= \mD_{(i)} + \lambda_{(i)} \left(\nabla_{\mD} R  -  \omega \nabla_{\mD} \tau\right) \Big|_{\mD = \mD_{(i)}}, \nbthis \label{eq_W_PGA}\\
		\mD_{(i+1)} &= \frac{\Pt {\mD}_{(i+1)}}{\fronorm{{\mA}_{(i+1)} \mD_{(i+1)}}}. \nbthis \label{eq_projection_W}
	\end{align*}
	The gradients $\nabla_{\mA} R$ and $\nabla_{\mD} R$ are computed as \cite{nguyen2023fast}
	\begin{align*}
		\nabla_{\mA} R \!&=\! \sum_{k = 1}^K \!\frac{\xi \tilde{\mH}_k \mA \mV}{\tr{ \mA \mV \mA^\H \tilde{\mH}_k} \!+\! \noise} \!-\! \frac{\xi \tilde{\mH}_k \mA \mV_{\bar{k}}}{\tr{ \mA \mV_{\bar{k}} \mA^\H \tilde{\mH}_k} \!+\! \noise}, \nbthis \label{grad_rate_F} \\
		\nabla_{\mD} R \!&=\! \sum_{k = 1}^K\! \frac{\xi \bar{\mH}_k \mD}{\tr{\mD \mD^\H \bar{\mH}_k} \!+\! \noise} \!-\! \frac{\xi \bar{\mH}_k \mD_{\bar{k}}}{  \tr{\mD_{\bar{k}} \mD_{\bar{k}}^\H \bar{\mH}_k} \!+\! \noise} , \nbthis \label{grad_rate_W}
	\end{align*}
	respectively, where $\xi  = 1/\ln 2$, $\mV \triangleq \mD \mD^\H, \mV_{\bar{k}} \triangleq \mD_{\bar{k}} \mD_{\bar{k}}^H, \tilde{\mH}_k \triangleq \vh_k \vh_k^\H, \bar{\mH}_k \triangleq \mA^\H \tilde{\mH}_k \mA$, and $\mD_{\bar{k}} \in \setC^{M \times K}$ is obtained by replacing the $k$-th column of $\mD$ with zeros. 
 
    \begin{theorem}
        \label{theo_gradient_tau}
        The gradients of $\tau$ w.r.t. $\mA$ and $\mD$ are respectively given as
        \begin{align*}
            \nabla_{\mA} \tau &= 2 (\mA \mD \mD^\H \mA^\H - \bm{\Psi}) \mA \mD \mD^\H, \nbthis \label{grad_tau_F} \\
            \nabla_{\mD} \tau &= 2 \mA^\H (\mA \mD \mD^\H \mA^\H - \bm{\Psi}) \mA \mD. \nbthis \label{grad_tau_W}
        \end{align*}
    \end{theorem}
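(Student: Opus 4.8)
The plan is to use Wirtinger (complex) matrix calculus: compute the differential $d\tau$ and read the two gradients off its canonical form. First I would introduce the shorthand $\mE \triangleq \mA\mD\mD^\H\mA^\H - \bm{\Psi}$ and $\mV \triangleq \mD\mD^\H$, and record the two structural facts that drive the whole argument: both $\mE$ and $\mV$ are Hermitian ($\mE$ because $\mA\mD\mD^\H\mA^\H$ and $\bm{\Psi}$ are each Hermitian, $\mV$ by construction). Writing $\tau = \fronorm{\mE}^2 = \tr{\mE\mE^\H}$, a first differential gives $d\tau = \tr{(d\mE)\mE^\H} + \tr{\mE(d\mE)^\H}$, and since $\mE^\H = \mE$ the second term is the complex conjugate of the first, so $d\tau = 2\,\mathrm{Re}\{\tr{\mE\,d\mE}\}$, where $d\mE = d(\mA\mV\mA^\H)$ with the other variable held fixed.

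Next I would expand $d\mE$ for each variable. For the $\mA$-update (with $\mD$, hence $\mV$, fixed) the product rule gives $d\mE = (d\mA)\mV\mA^\H + \mA\mV(d\mA)^\H$. Substituting into $\tr{\mE\,d\mE}$ and cyclically permuting, the first piece becomes $\tr{\mV\mA^\H\mE\,d\mA}$, while the second piece, which carries $(d\mA)^\H$, equals the conjugate of the first (take a conjugate transpose inside the trace and use that $\mE,\mV$ are Hermitian). Hence $\tr{\mE\,d\mE}=2\,\mathrm{Re}\{\tr{\mV\mA^\H\mE\,d\mA}\}$ and $d\tau = 4\,\mathrm{Re}\{\tr{\mV\mA^\H\mE\,d\mA}\}$. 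Matching this against the canonical form $d\tau = 2\,\mathrm{Re}\{\tr{(\nabla_{\mA}\tau)^\H\,d\mA}\}$ identifies $\nabla_{\mA}\tau = 2\mE\mA\mV$, which is exactly \eqref{grad_tau_F}. The $\mD$-update is identical: now $d\mE = \mA(d\mD)\mD^\H\mA^\H + \mA\mD(d\mD)^\H\mA^\H$, cyclic permutation sends the holomorphic piece to $\tr{\mD^\H\mA^\H\mE\mA\,d\mD}$, the anti-holomorphic piece is again its conjugate, and matching the canonical form yields $\nabla_{\mD}\tau = 2\mA^\H\mE\mA\mD$, i.e. \eqref{grad_tau_W}.

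The only real subtlety, and the step I would flag as the main obstacle, is bookkeeping of the Wirtinger derivative: one must treat $\mA$ and $\mA^\H$ as independent, and the factor of $2$ in both gradients is not cosmetic but arises from the holomorphic and anti-holomorphic contributions adding coherently, which is licensed precisely by the Hermitian symmetry of $\mE$ and $\mV$. I would therefore state the adopted gradient convention (the steepest-ascent direction with respect to the real inner product $\langle \mX, \mY\rangle = \mathrm{Re}\{\tr{\mX^\H\mY}\}$) up front, since it fixes these constants and keeps the result consistent with the PGA updates in \eqref{eq_F_PGA} and \eqref{eq_W_PGA}. Note that no positive-semidefiniteness or feasibility property of $\bm{\Psi}$ is needed; only its Hermitian symmetry enters.
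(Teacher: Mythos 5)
Your proposal is correct, and it reaches \eqref{grad_tau_F}--\eqref{grad_tau_W} by a recognizably different route than the paper. The paper works entrywise: it defines $[\nabla_{\mZ}\tau]_{rc}=\partial\tau/\partial[\mZ]_{rc}^*$, routes the computation through the intermediate variable $\mU=\mA\mD\mD^\H\mA^\H$ via the chain rule $\partial\tau/\partial[\mA]_{nm}^*=\trt{(\partial\tau/\partial\mU^*)(\partial\mU/\partial[\mA]_{nm}^*)}$, evaluates $\partial\tau/\partial\mU^*=2(\mU-\bm{\Psi})$ from cookbook trace identities, and then assembles $\partial\mU/\partial[\mA]_{nm}^*=\mA\mD\mD^\H\bm{\delta}_m\bm{\delta}_n^\H$ using standard-basis vectors before reassembling the matrix gradient. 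You instead take the total differential $d\tau=2\,\mathrm{Re}\{\tr{\mE\,d\mE}\}$ and identify the gradient from the canonical form $d\tau=2\,\mathrm{Re}\{\tr{(\nabla\tau)^\H\,d\mX}\}$; your algebra checks out (the uniqueness of the identification follows since $\mathrm{Re}\{\tr{\mC\,d\mX}\}=0$ for all $d\mX$ forces $\mC=\bm{0}$), and your bookkeeping of where the factor of $2$ comes from --- one doubling from the two terms of $\tr{\mE\mE^\H}$, one from the holomorphic and anti-holomorphic parts of $d\mE$, collapsed by the Hermitian symmetry of $\mE$ and $\mV$ --- is exactly right and is the same symmetry the paper invokes when it writes its chain rule with a single trace term. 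What your approach buys is the elimination of the $\bm{\delta}_i\bm{\delta}_j^\H$ index gymnastics and a uniform treatment of both gradients (the paper only derives the $\mA$-gradient in detail and asserts the $\mD$-case is similar); you also make explicit the gradient convention and the real inner product that fix the constants, which the paper leaves implicit in its citation of \cite{hjorungnes2007complex}. What the paper's route buys is a modular reuse of tabulated derivative identities, at the cost of more notation. Your closing observation that only the Hermitian symmetry of $\bm{\Psi}$ (not its positive semidefiniteness or diagonal constraint) is used is also correct and consistent with the definition of $\mathcal{S}$.
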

    
    \begin{proof}
        See Appendix \ref{app_proof_grad_tau}. \epr
    \end{proof}

	In the PGA method, $\{\mA, \mD\}$ can be obtained by applying the update rules \eqref{eq_F_PGA} and \eqref{eq_W_PGA}. However, such a straightforward application often yields poor convergence since the gradients of $R$ and $\tau$ w.r.t. $\mA$ and $\mD$ are significantly different in magnitude, which affects their contributions to maximizing $R$ and minimizing $\tau$ at each iteration. To overcome this issue, we propose updating $\mA$ over multiple iterations before updating $\mD$. The approach enables $\mA$ to keep pace with $\mD$ during the PGA iterations. Specifically, $\mA$ is updated as:
	\begin{subnumcases}{}
	 	\hat{\mA}_{(i,0)} = \mA_{(i)}, \label{eq_update_F} \\
	 	\hat{\mA}_{(i,j+1)} = \hat{\mA}_{(i,j)} + \mu_{(i,j)} \left(\nabla_{\mA} R  - \omega \nabla_{\mA} \tau\right) \Big|_{\mA = \hat{\mA}_{(i,j)}},  \label{eq_update_F_inner} \\
	 	\mA_{(i+1)} = \hat{\mA}_{(i,J)},  \label{eq_update_F_outer}
 	\end{subnumcases}
 	where $I$ and $J$ respectively represent the number of outer and inner iterations for updating $\mA$, $i=0,\ldots,I$, and $j=0,\ldots,J-1$. The above update is followed by the projection in \eqref{eq_projection_F}, where $\hat{\mA}_{(i,j)}$ and $\mu_{(i,j)}$ are respectively the precoder and step size in the $j$-th inner iterations of the $i$-th outer iteration, and $\mA_{(i)}$ is the final precoder obtained in the $i$-th outer iteration once all inner iterations have been completed. On the other hand, $\mD$ is updated as
	 \begin{align*}
		 	\mD_{(i+1)} &= \mD_{(i)} + \lambda_{(i)} \left( \nabla_{\mD} R  -  \omega \eta \nabla_{\mD} \tau\right) \Big|_{\mD = \mD_{(i)}}, \nbthis \label{eq_update_W}
		 \end{align*}
	 followed by the projection in \eqref{eq_projection_W}, where $\mD_{(i)}$ is the digital precoder obtained in the $i$-th outer iteration. A weight $\eta$ is applied to $\nabla_{\mD} \tau$ to balance it with $\nabla_{\mA} \tau$. It can be shown that $\abs{\left[\nabla_{\mA} \tau\right]{nm}} \ll \abs{\left[\nabla_{\mD} \tau\right]_{mk}}, \forall n, m$, leading us to set $\eta = \frac{1}{N}$.
	
\subsection{Proposed UPGANet}
\label{ssec:UnfPGA}

\begin{figure}[t]
	\hspace{-0.3cm}
	\includegraphics[scale=0.380]{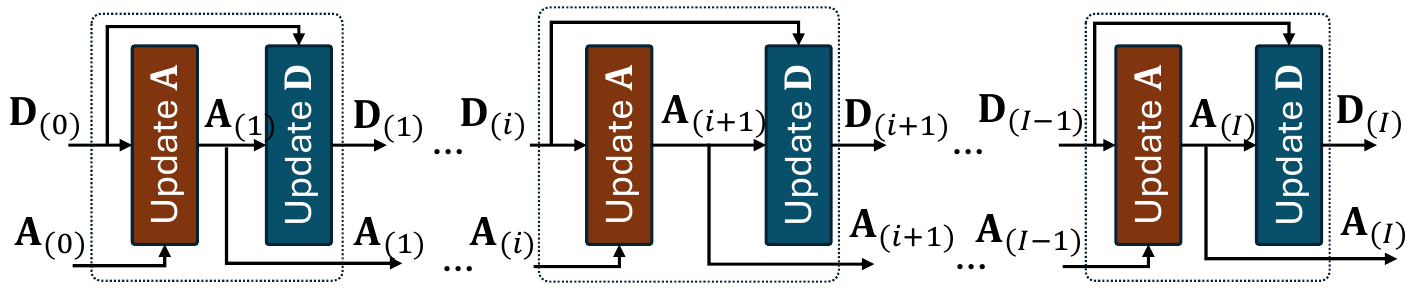}
	\caption{Illustration of UPGANet.}
	\label{fig_unfolding_model}
\end{figure}

To accelerate the convergence of the PGA method without increasing per-iteration complexity due to the backtracking line search, we leverage data to tune the step sizes $\{\mu_{(i,j)},\lambda_{(i)}\}_{i=0,j=0}^{I-1,J-1}$. This is achieved by embedding the proposed PGA optimization framework from Section \ref{sec_PGA} into a DNN, leading to UPGANet. This allows obtaining efficient solutions to $\{\mA, \mD\}$ within a limited and fixed number of iterations, while preserving the interpretability and flexibility of the conventional PGA method. 

We construct UPGANet with $I$ layers by unrolling the $I$ PGA iterations. Its learning task is to generate feasible precoders $\{\mA, \mD\}$ that maximize the objective function $R - \omega\tau$. The unfolding mechanism maps an inner/outer iteration of the PGA procedure to an inner/outer layer of UPGANet. Due to this mapping, we still use subscripts $(i,j)$ for the outer/inner layers when describing UPGANet. For ease of exposition, we denote $\bm{\mu} \triangleq \{\mu_{(i j)}\}_{i, j=0}^{I, J}$ and $\bm{\lambda} \triangleq \{\lambda_{(i)}\}_{i=0}^{I}$. 

UPGANet follows the updating process in \eqref{eq_update_F}--\eqref{eq_update_W}. It takes as input an initial solution $\{\mA_{(0)}, \mD_{(0)}\}$, and data for $\mH = [\vh_1, \ldots, \vh_K]^\H$, $\Pt$, and $\noise$. The output at the $i$-th outer layer is $\{\mA_{(i)}, \mD_{(i)}\}$. Each outer layer includes a sub-network of $J$ layers to output $\mA_{(i)}$, mimicking the process in \eqref{eq_update_F}--\eqref{eq_update_F_outer}. The operations inside each inner/outer layer include computing the gradients in \eqref{grad_rate_W}--\eqref{grad_tau_W} and applying the updating rules \eqref{eq_update_F}--\eqref{eq_update_W} and the projections \eqref{eq_projection_F} and \eqref{eq_projection_W}. UPGANet is illustrated in Fig.\ \ref{fig_unfolding_model}, while its detailed operation will be further discussed in Section \ref{sec_overall_algorithm}.

\subsubsection{Training UPGANet}
Following the design problem \eqref{opt_prob}, UPGANet is trained in an  unsupervised manner to maximize $R - \omega \tau$. Accordingly, the loss function is set to
\begin{align*}
	&\mathcal{L}(\bm{\mu}, \bm{\lambda}) 
	= \omega \fronorm{ \mA_{(I)} \mD_{(I)} \mD_{(I)}^\H \mA_{(I)}^\H - \bm{\Psi} }^2\\
	&\qquad- \sum_{k = 1}^K \log_2 \left( 1 + \frac{\abs{\vh_k^\H \mA_{(I)} \vd_{k(I)}}^2 }{\sum_{k' \neq k}^K \abs{\vh_k^\H \mA_{(I)} \vd_{j(I)}}^2 + \noise} \right). \nbthis \label{eq_loss_0}
\end{align*}
The data set includes multiple channel realizations, and we boost the learned hyperparameters to be suitable for multiple SNRs by randomly choosing $\Pt \in [\gamma_{\min}, \gamma_{\max}]$ dBW while fixing $\noise = 1$. The weight $\omega$ is treated as a given hyperparameter and is chosen to ensure a good tradeoff between $R$ and $\tau$ for moderate-to-high SNRs during training. The loss $\mathcal{L}(\bm{\mu}, \bm{\lambda})$ is a function of the step sizes $\{\bm{\mu}, \bm{\lambda}\}$ because $\{\mA_{(I)}, \mD_{(I)}\}$ depends on $\{\mA_{(i)}\}_{i=0}^{I-1}$, $\{\mD_{(i)}\}_{i=0}^{I-1}$, and $\{\bm{\mu}, \bm{\lambda}\}$. With the loss function \eqref{eq_loss_0}, UPGANet is trained to optimize $\{\bm{\mu}, \bm{\lambda}\}$ to achieve the largest objective value within $I$ iterations. Furthermore, it is important to start the procedure with a good initial solution $\{\mA_{(0)}, \mD_{(0)}\}$. We discuss this issue in the next subsection.

\subsection{Overall Unfolded JCAS Algorithm}
\label{sec_overall_algorithm}
\subsubsection{Overall Algorithm}

\begin{algorithm}[H]
	\small
	\caption{Proposed HBF-JCAS design based on UPGANet}
	\label{alg_algorithm}
	\begin{algorithmic}[1]
		\REQUIRE $\mH$, $\Pt$, $\omega$, and the trained step sizes $\{\bm{\mu}, \bm{\lambda}\}$.
		\ENSURE $\mA$ and $\mD$
		\STATE \textbf{Initialization:} Generate $\{\mA_{(0)}, \mD_{(0)}\}$ based on \eqref{eq_init}.
		
		\FOR{$i = 0 \rightarrow I-1$}
		\STATE Set $\hat{\mA}_{(i,0)} = \mA_{(i)}$.
		\FOR{$j = 0 \rightarrow J - 1$}
		\STATE Obtain the gradients $\nabla_{\mA} R$ and $\nabla_{\mA} \tau$ at $(\mA, \mD) = (\hat{\mA}_{(i,j)}, \mD_{(i)})$ based on \eqref{grad_rate_F} and \eqref{grad_tau_F}.
		\STATE Obtain $\hat{\mA}_{(i,j+1)}$ based on \eqref{eq_update_F_inner}.
		\ENDFOR
		\STATE Set $\mA_{(i+1)} = \hat{\mA}_{(i,J)}$ and apply the projection in \eqref{eq_projection_F}.
		\STATE Obtain the gradients $\nabla_{\mD} R$ and $\nabla_{\mD} \tau$ at $(\mA, \mD) = \left(\mA_{(i+1)}, \mD_{(i)}\right)$ based on \eqref{grad_rate_W} and \eqref{grad_tau_W}.
		\STATE Obtain $\mD_{(i+1)}$ based on \eqref{eq_update_W} and apply the projection \eqref{eq_projection_W}.
		\ENDFOR
		\RETURN $\mA_{(I)}$ and $\mD_{(I)}$ as the solution to $\mA$ and $\mD$.
	\end{algorithmic}
\end{algorithm}

The proposed HBF design based on UPGANet is outlined in Algorithm \ref{alg_algorithm}. {The initial precoders $\{\mA_{(0)}, \mD_{(0)}\}$ are chosen as
\begin{align*}
	[\mA_{(0)}]_{nm} = e^{-j \vartheta_{nm}},\ 
	\mD_{(0)} = \mA_{(0)}^{\dagger} \mX_{\mathrm{ZF}}, \nbthis \label{eq_init}
\end{align*}
with $\mD_{(0)}$ normalized as $\mD_{(0)} = \sqrt{\Pt} \mD_{(0)}/\fronorm{\mA_{(0)} \mD_{(0)}}$ to satisfy \eqref{cons_power}. In \eqref{eq_init}, $\vartheta_{nm}$ is the phase of the $(n,m)$-th entries of $\mG = [\vh_1, \ldots, \vh_K, \bar{\va}(\theta_{\mathtt{d},1}), \ldots, \bar{\va}(\theta_{\mathtt{d},M - K})]$, where $\bar{\va}(\theta_{\mathtt{d},p})$ is the steering vector of the $p$-th desired sensing angle, $p \leq P$. Here, $P$ denotes the number of angles at which the desired beampattern has high gains, and $(\cdot)^{\dagger}$ denotes the pseudo-inverse. Furthermore, we set $\mX_{\mathrm{ZF}} = \mH^\dagger$, assuming that $M \leq K + P$. 
With \eqref{eq_init}, $\mA_{(0)}$ is aligned with the communications channels $\{\vh_1, \ldots, \vh_K\}$ and the sensing steering vectors $\bar{\va}(\theta_{\mathtt{d},p}), \forall p$ to harvest the large array gains. Furthermore, $\mD_{(0)}$ in \eqref{eq_init} is the constrained least-squares solution to the problem $\mathrm{min}_{\mD} \fronorm{\mA_{(0)} \mD - \mX_{\mathrm{ZF}}}$ subject to \eqref{cons_power}. Therefore, the proposed input/initialization can provide good performance in multiuser massive MIMO systems, especially when $N$ is large, as will be further demonstrated in Section \ref{sec_simulation}.}

UPGANet uses the trained step sizes $\{\bm{\mu}, \bm{\lambda}\}$ to perform the updates in \eqref{eq_update_F}--\eqref{eq_update_W} and the projections \eqref{eq_projection_F} and \eqref{eq_projection_W}, as outlined in steps 2--11 of Algorithm \ref{alg_algorithm}. Specifically, steps 3--8 compute the output $\mA_{(i+1)}$ over the $J$ layers. Then, $\mD_{(i+1)}$ is obtained in step 10 based on the updated $\mA_{(i+1)}$. The outcome of the algorithm is the final output of UPGANet.

\subsubsection{Complexity Analysis}

To analyze the complexity of the proposed JCAS-HBF design in Algorithm \ref{alg_algorithm}, we first observe that $\mV$ and $\mV_{\bar{k}}$ are unchanged over $J$ inner iterations, while $\mD$ is of size $(M \times K)$ with $M, K \ll N$. Therefore, the main computational complexity of Algorithm \ref{alg_algorithm} comes from computing the gradients in \eqref{grad_rate_F}, \eqref{grad_tau_F}, \eqref{grad_rate_W}, and \eqref{grad_tau_W}. In \eqref{grad_rate_F}, computing $\mathbf{\tilde{H}}_k\mA$ requires a complexity of $\mathcal{O}(NM)$. Thus, the complexity in computing $\mathbf{\tilde{H}}_k\mA \mV$ is $\mathcal{O}(NM^2)$. Computing $\operatorname{trace}\{\mA \mV \mA^\H \tilde{\mH}_k\}$ requires $\mathcal{O}(NM)$ because $\mV \mA^\H \tilde{\mH}_k = (\mathbf{\tilde{H}}_k\mA \mV)^\H$, where $\mathbf{\tilde{H}}_k\mA \mV$ has been computed as above and the $\tr{\cdot}$ operator only requires the diagonal elements of its matrix argument. Thus, the complexity of the first summation term in \eqref{grad_rate_F} is $\mathcal{O}(NM^2K)$. Since the complexity of the two summation terms in \eqref{grad_rate_F} are the same, the total complexity in computing \eqref{grad_rate_F} is still $\mathcal{O}(NM^2K)$. In \eqref{grad_tau_F}, the matrix $\mathbf{FW}$ is computed first and then  used to compute~\eqref{grad_tau_F}, so the computational complexity required for~\eqref{grad_tau_F} is $\mathcal{O}(N^2K)$. 
Similarly, we can obtain the complexity required for~\eqref{grad_rate_W} and \eqref{grad_tau_W} as $\mathcal{O}(NMK)$ and $\mathcal{O}(N^2K)$, respectively. The overall complexity of Algorithm \ref{alg_algorithm} is thus $\mathcal{O}(IJN^2K)$, as summarized in Table \ref{tab_complexity}.

 \renewcommand{\arraystretch}{1.0}
 \begin{table}[H]
 	\small
 	\begin{center}
 		\caption{\centering Computational complexity of Algorithm \ref{alg_algorithm}.}
 		\label{tab_complexity}
 		\begin{tabular}{|c|c|}
 			\hline
 			Tasks & Complexities \\
 			\hline
 			\hline
 			Compute $\nabla_{\mA} R$ & $\mathcal{O}(NM^2K)$ (per inner iteration/layer) \\
 			\hline
 			Compute $\nabla_{\mD} R$ & $\mathcal{O}(N^2K)$ (per inner iteration/layer) \\
 			\hline 
 			Compute $\nabla_{\mA} \tau$ & $\mathcal{O}(NMK)$  (per outer iteration/layer) \\
 			\hline 
 			Compute $\nabla_{\mA} \tau$ & $\mathcal{O}(N^2K)$ (per outer iteration/layer) \\
 			\hline
 			\textbf{Overall algorithm} & $\mathcal{O}(IJN^2K)$ \\
 			\hline
 		\end{tabular}
 	\end{center}
 \end{table}

\section{Simulation Results}
\label{sec_simulation}

Here we provide numerical results to demonstrate the performance of the proposed JCAS-HBF designs. Unless otherwise stated, we assume scenarios with $P=3$, $K = M = 4$, $N=64$, and $\omega = 0.3$. The mmWave communications channels are generated using the same parameters as in \cite{sohrabi2016hybrid}. We assume a desired beampattern that steers beams towards the following $P=3$ directions: $\{-60^{\circ}, 0^{\circ}, 60^{\circ}\}$. Thus, the desired beampattern is defined as $B_{\mathtt{d}}(\theta_t) = 1$ for $\theta_t \in [\theta_{\mathtt{d},p} - \delta_{\theta}, \theta_{\mathtt{d},p} + \delta_{\theta}]$; otherwise, $B_{\mathtt{d}}(\theta_t) = 0$  \cite{cheng2021hybrid}. Here, ${\delta_{\theta} = 5^\circ}$ is half the mainlobe beamwidth of $B_{\mathtt{d}}(\theta_t)$. UPGANet is implemented using Python with the Pytorch library. It is trained for $I=120$ and the SNR range $[\gamma_{\min}, \gamma_{\max}] = [0, 12]$ dB using the Adam optimizer with $1000$ channels over $100$ and $30$ epochs for $J = 1$ and $J = \{10,20\}$, respectively. We set $\mu_{(0,0)} = \lambda_{(0)} = 0.01$, which are also used as the fixed step sizes for the PGA algorithm without unfolding.

\begin{figure}
	\centering
	\includegraphics[scale=0.5]{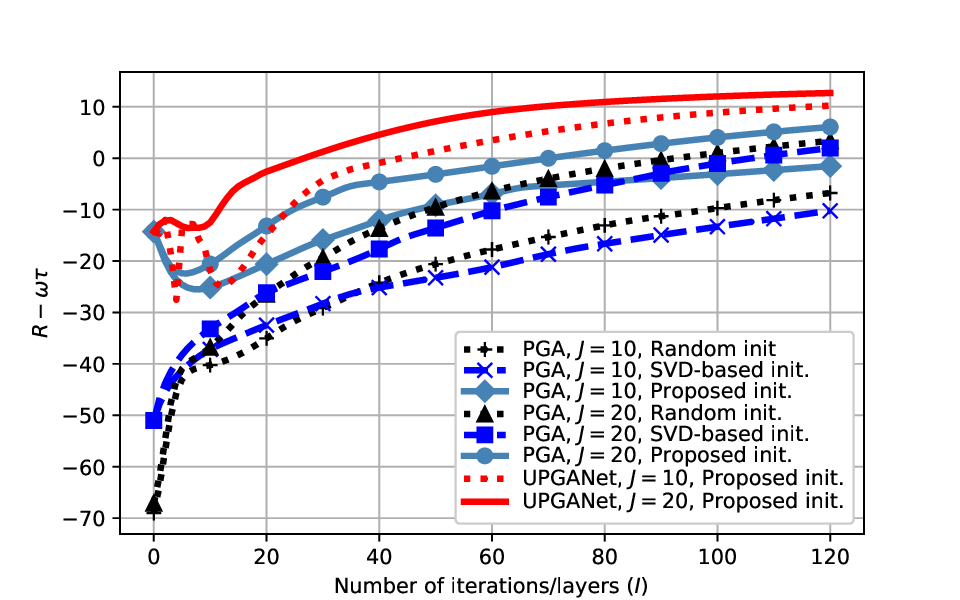}
	\caption{Convergence of the PGA algorithm with $N=64, K = M = 4, J = \{10, 20\}, \omega = 0.3$, SNR $= 12$ dB, and different initializations.}
	\label{fig_comp_conv_init}
\end{figure}

\begin{figure*}[!htb]
	\hspace{-0.5cm}
	\subfigure[Convergence (SNR $= 12$ dB)]
	{
		\includegraphics[scale=0.41]{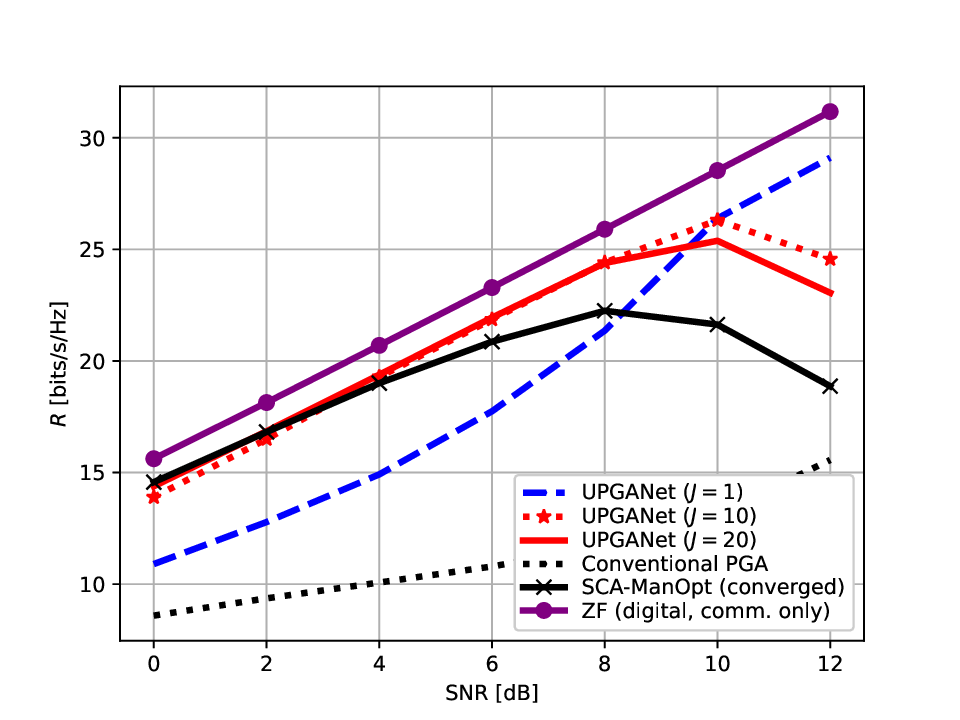}
		\label{fig_rate_vs_SNR_64}
	}\hspace{-0.95cm}
	\subfigure[Sum rate]
	{
		\includegraphics[scale=0.41]{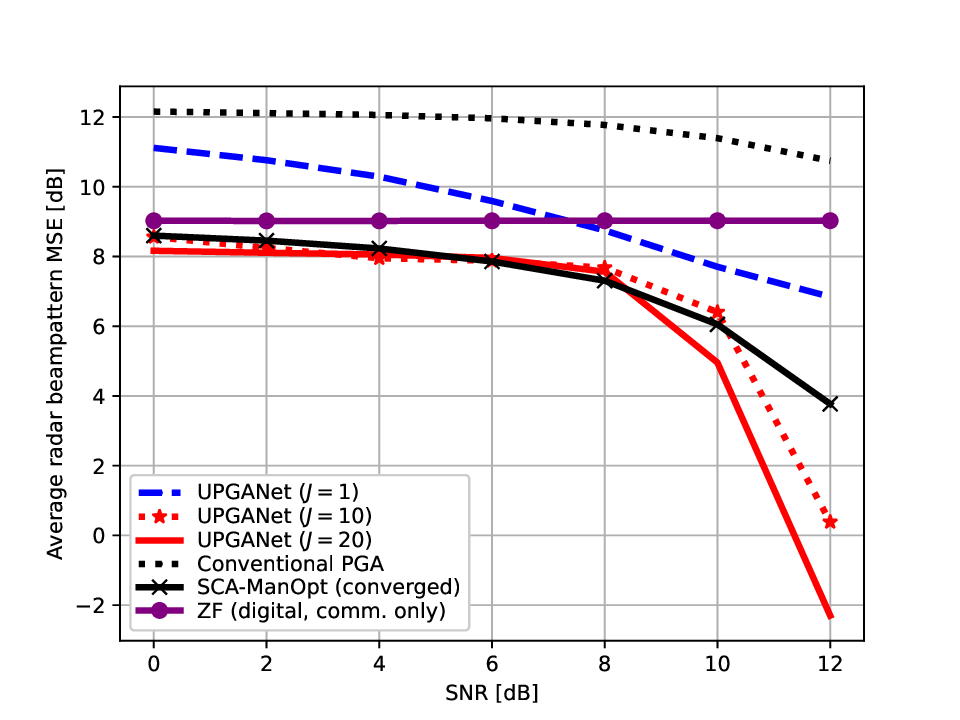}
		\label{fig_MSE_vs_SNR_64}
	}
	\hspace{-0.95cm}
	\subfigure[Average beampattern MSE]
	{
		\includegraphics[scale=0.41]{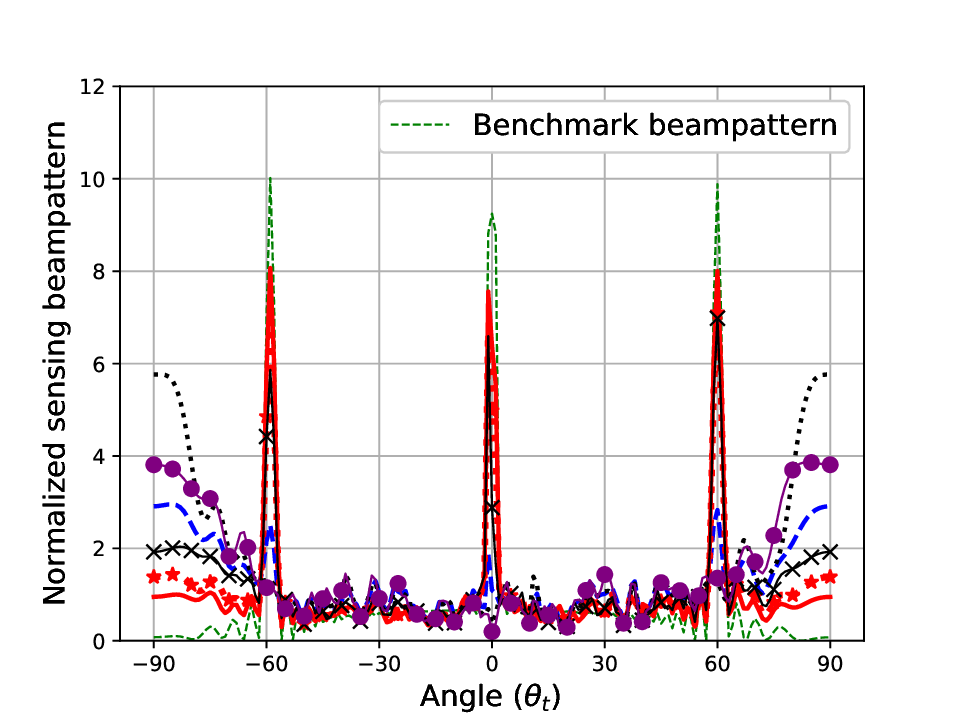}
		\label{fig_beampattern_64}
	}
	\caption{Performance and beampattern of the considered schemes versus SNRs with $N=64$, $K = M = 4$, $\omega = 0.3$, and $J = \{1, 10, 20\}$.}\vspace{-0.55cm}
	\label{fig_SNR_64}
\end{figure*}

In Fig.\ \ref{fig_comp_conv_init}, we demonstrate the convergence of UPGANet with $J = \{10,20\}$ as well as the effect of the initial solution/input in \eqref{eq_init} on the convergence. For comparison, we consider the following methods: \textit{Random init:} We randomly generate $\mA_{(0)}$ and set  $\mD_{(0)} = \left( \mH \mA_{(0)}\right)^{\dagger}$ \cite{sohrabi2016hybrid}; \textit{SVD-based init:} We employ $\mA_{(0)} = [\vu_1, \ldots, \vu_K, \va(\theta_{\mathtt{d},1}), \ldots, \va(\theta_{\mathtt{d},M - K})]$, where $\vu_k$ is the $k$-th principle singular vector of $\mH$, and $\bar{\va}(\theta_{\mathtt{d},p})$ is the steering vector \cite{agiv2022learn}, while $\mD_{(0)}$ is the same as in \eqref{eq_init}. It is observed from Fig.\ \ref{fig_comp_conv_init} that the proposed initialization substantially improves the convergence of PGA with and without unfolding. Specifically, $\{\mA_{(0)}, \mD_{(0)}\}$ in \eqref{eq_init} yields both a higher initial and final objective values than the other initializations. The SVD-based method yields a relatively good initial objective, but after a few iterations, it behaves similarly to the random initialization, which has not converged after $120$ iterations. It is also seen that a larger $J$ leads to better convergence, and UPGANet with $J=20$ exhibits the best convergence. 

The complexity and runtime reduction of UPGANet are also observed in Fig.\ \ref{fig_comp_conv_init}. Comparing (i) UPGANet with $J=20$, (ii) PGA with $J=20$, and (iii) PGA with $J=10$, all of which employ the proposed initialization, we see that these schemes reach $R - \omega \tau = 0$ at $I \approx \{25, 70, 120\}$, respectively. This means that to achieve the peak value of the objective, approach (i) requires $n_{\mA} = IJ = 25 \times 20 = 500$ updates of $\mA$ and $n_{\mD} = I = 25$ updates of $\mD$. On the other hand, algorithms (ii) and (iii) require $(n_{\mA}, n_{\mD}) = (1400, 70)$ and $(n_{\mA}, n_{\mD}) = (1200, 120)$ updates, respectively.

To demonstrate the communications and sensing performance of UPGANet, we consider the following benchmark approaches: (i) \textit{conventional PGA}: We fix the step sizes as $\mu_{(0)} = \lambda_{(0)} = 0.01$ and set $J=1$. (ii) \textit{SCA-ManOpt}: We first find the precoder $\mX^{\star}$ that maximizes $R$ via the SCA method \cite{tran2012fast}. Then, $\mX$ is obtained by maximizing $\rho \fronorm{ \mX - \mX^{\star} }^2 + (1-\rho) \fronorm{ \mX \mX^\H - \bm{\Psi} }^2$ with $\rho = 0.2$ \cite{liu2019hybrid, liu2018mu, liu2018toward}. After that, we obtain $\{\mA, \mD\}$ via matrix factorization \cite{yu2016alternating}. The convergence tolerances of both the SCA and ManOpt procedures are set to $\varepsilon = 10^{-3}$. (iii) \textit{ZF (digital, comm. only)}: We employ a fully digital ZF beamformer in the communications-only system, which has near-optimal communications performance \cite{zhang2018performance} and serves as an upper bound on the sum rate achieved by the JCAS-HBF approaches.

In Figs.\ \ref{fig_rate_vs_SNR_64} and \ref{fig_MSE_vs_SNR_64}, we show the communications sum rate $R$ and the average radar beampattern mean square error (MSE) of the considered approaches. The beampattern MSE is computed as $\text{MSE} = \frac{1}{T} \sum_{t=1}^{T} \abs{B_{\mathtt{d}}(\theta_t) - \bar{\va}^\H(\theta_t)  \bm{\Psi}  \bar{\va}(\theta_t)}^2$. It is observed from the figure that UPGANet with $J=\{5, 20\}$ performs close to the communications-only system with the ZF beamformer and surpasses SCA-ManOpt in terms of communications sum rate, while maintaining comparable or lower radar beampattern MSEs, especially at high SNR. For instance, at SNR $= 12$ dB, UPGANet with $J=\{10, 20\}$ achieves improvements of about $\{33.2\%, 24.7\%\}$ in sum rate and $\{2.5, 6\}$ dB lower MSEs compared with SCA-ManOpt, respectively. UPGANet with $J=1$ can attain good communications performance at high SNR; however, it has degraded sensing performance. Among the compared schemes, the conventional PGA with $J=1$ has the worst performance. 

We note the performance loss of all the considered JCAS approach in the interference-limited regime, as clearly seen at high SNRs. In this regime, the JCAS waveform needs to serve both the radar targets and the communications users at the same time. Thus, its capability to mitigate inter-user interference (IUI) is significantly reduced, causing decreased $R$ in interference-limited cases or when using a larger $\omega$. We note that this can be avoided by setting a smaller $\omega$ or normalizing $\tau$; however, this adjustment leads to compromised sensing performance.

Finally, we compare the beampattern of UPGANet to those of the considered benchmarks in Fig.\  \ref{fig_beampattern_64}. It is observed that the average sensing beampatterns obtained by UPGANet fit the benchmark beampattern the best. They have significantly higher peaks at the target angles $\{-60^{\circ}, 0^{\circ}, 60^{\circ}\}$ and lower side lobes compared to the beampattern obtained with SCA-ManOpt and conventional PGA. As expected, the beampattern generated by the communication-only design with the ZF precoder is the worst because it is designed to maximize the communications rate.

\section{Conclusions}
\label{sec_conclusion}

We have investigated multiuser massive MIMO JCAS systems with HBF transceiver architectures, aiming at maximizing the tradeoff between the communications sum rate and the radar sensing beampattern accuracy. By deriving and analyzing the gradients of those metrics, we proposed effective updating rules for the analog and digital precoders for improved convergence of the PGA optimization. We further proposed UPGANet based on the deep unfolding technique, where the step sizes of the PGA approach are learned in an unsupervised manner. UPGANet performs much faster with significantly reduced computational complexity thanks to its well-trained step sizes compared to the unfolded version. Our extensive numerical results demonstrate that UPGANet achieves significant improvements in communications and sensing performance w.r.t. conventional JCAS-HBF designs.

\appendices

\section{Proof of Theorem \ref{theo_gradient_tau}}

\label{app_proof_grad_tau}
We first recall that $\left[\nabla_{\mZ} f\right]_{rc} = \frac{\partial f}{\partial [\mZ]_{rc}^*}$, where $\mZ \in \setC^{R \times C}$. Thus, $\nabla_{\mA} \tau$ and $\nabla_{\mD} \tau$ can be obtained using $\partial \tau / \partial [\mA]_{nm}^*$ and $\partial \tau / \partial [\mD]_{mk}^*$, respectively, with $n = 1,\ldots,N$, $m = 1,\ldots,M$, and $k = 1,\ldots, K$. Let us denote $\mU \triangleq \mA \mD \mD^\H \mA^\H$ and rewrite $\tau$ as $\tau = \norm{ \mU - \bm{\Psi} }_{\mathcal{F}}^2$. Then, $\partial \tau / \partial [\mA]_{nm}^*$ and $\partial \tau / \partial [\mD]_{mk}^*$ can be derived by the following chain rule
\begin{align*}
	\frac{\partial \tau}{\partial [\mA]_{nm}^*} &= \trt{\left( \frac{\partial \tau}{\partial \mU^\H} \right)^\T \frac{\partial \mU^\H}{\partial [\mA]_{nm}^*} } = \trt{\frac{\partial \tau}{\partial \mU^*} \frac{\partial \mU}{\partial [\mA]_{nm}^*} }, \nbthis \label{eq_grad_tau_F_2}\\
	\frac{\partial \tau}{\partial [\mD]_{mk}^*} &= \trt{\left( \frac{\partial \tau}{\partial \mU^\H} \right)^\T \frac{\partial \mU^\H}{\partial [\mD]_{mk}^*} } = \trt{\frac{\partial \tau}{\partial \mU^*} \frac{\partial \mU}{\partial [\mD]_{mk}^*} }, \nbthis \label{eq_grad_tau_W_2}
\end{align*}
where \eqref{eq_grad_tau_F_2} and \eqref{eq_grad_tau_W_2} follow from the fact that $\mU = \mU^\H$.

To compute $\partial \tau/\partial \mU^*$, we rewrite $\tau = \tr{\mU \mU^\H - \bm{\Psi} \mU^\H - \mU \bm{\Psi}^\H + \bm{\Psi} \bm{\Psi}^\H}$
and note that since $\partial \tr{\mU \mU^\H}/\partial \mU^* = 2 \mU$ and $\partial \tr{\mU \bm{\Psi}^\H}/\partial \mU^* = \partial \tr{\bm{\Psi} \mU^\H}/\partial \mU^* = \bm{\Psi}$ \cite{petersen2008matrix}, we have
\begin{align*}
	\partial \tau / \partial \mU^* = 2(\mU - \bm{\Psi}). \nbthis \label{eq_grad_tau_R}
\end{align*}

We now compute $\partial \mU/\partial [\mA]_{nm}^*$ in \eqref{eq_grad_tau_F_2}. Let us write $[\mU]_{ij} = \tr{\bm{\delta}_i^\H \mA \mD \mD^\H \mA^\H \bm{\delta}_j } = \tr{\mA \mD \mD^\H \mA^\H \bm{\delta}_j \bm{\delta}_i^\H}$ where $\bm{\delta}_i$ and $\bm{\delta}_j$ are the $i$-th and $j$-th columns of identity matrix $\mI_N$, respectively. Then, using the result $ \partial \tr{\mZ \mQ_0 \mZ^\H \mQ_1} / \partial \mZ^* = \mQ_1 \mZ \mQ_0$ in \cite{hjorungnes2007complex}, we have $\partial [\mU]_{ij} / \partial \mA^* = \bm{\delta}_j \bm{\delta}_i^\H \mA \mD \mD^\H$. 
Furthermore, since $\partial [\mU]_{ij} / \partial [\mA]_{nm}^*$ is the $(n, m)$-th entry of $\partial [\mU]_{ij} / \partial \mA^*$, we can write
\begin{align*}
	\partial [\mU]_{ij}/{\partial [\mA]_{nm}^*} = \bm{\delta}_n^\H \bm{\delta}_j \bm{\delta}_i^\H \mA \mD \mD^\H \bm{\delta}_m = \bm{\delta}_i^\H \mA \mD \mD^\H \bm{\delta}_m \bm{\delta}_n^\H \bm{\delta}_j ,  \nbthis \label{eq_grad_tau_F_4}
\end{align*}
where $\bm{\delta}_n$ and $\bm{\delta}_m$ are the $n$-th and $m$-th columns of identity matrices $\mI_N$ and $\mI_{M}$, respectively. The second equality in \eqref{eq_grad_tau_F_4} holds because $\bm{\delta}_n^\H \bm{\delta}_j$ is a scalar. Thus, we have
\begin{align*}
	\partial \mU/{\partial [\mA]_{nm}^*} = \mA \mD \mD^\H \bm{\delta}_m \bm{\delta}_n^\H.  \nbthis \label{eq_grad_tau_F_5}
\end{align*}
By substituting \eqref{eq_grad_tau_R} and \eqref{eq_grad_tau_F_5} into \eqref{eq_grad_tau_F_2}, we obtain
\begin{align*}
	\partial \tau/{\partial [\mA]_{nm}^*} &= 2 \tr{(\mU-\bm{\Psi}) \mA \mD \mD^\H \bm{\delta}_m \bm{\delta}_n^\H}\\
	&= 2 \bm{\delta}_n^\H(\mU-\bm{\Psi}) \mA \mD \mD^\H \bm{\delta}_m.  \nbthis \label{eq_grad_tau_F_6}
\end{align*}
Again, we utilize the fact that $\partial \tau/\partial [\mA]_{nm}^*$ is the $(n,m)$-th element of $\partial \tau/\partial [\mA]_{nm}^*$ to obtain
\begin{align*}
	\partial \tau/{\partial \mA^*} &= 2 (\mU-\bm{\Psi}) \mA \mD \mD^\H.  \nbthis \label{eq_grad_tau_F_7}
\end{align*}
Replacing $\mU$ by $\mA \mD \mD^\H \mA^\H$ in \eqref{eq_grad_tau_F_7} gives us the result \eqref{grad_tau_F}.

The derivation of $\partial \tau/\partial \mD^*$ follows a similar approach. For brevity, we omit the detailed derivations here.

\bibliographystyle{IEEEtran}
\bibliography{IEEEabrv,Bibliography}

\begin{thebibliography}{10}
\providecommand{\url}[1]{#1}
\csname url@samestyle\endcsname
\providecommand{\newblock}{\relax}
\providecommand{\bibinfo}[2]{#2}
\providecommand{\BIBentrySTDinterwordspacing}{\spaceskip=0pt\relax}
\providecommand{\BIBentryALTinterwordstretchfactor}{4}
\providecommand{\BIBentryALTinterwordspacing}{\spaceskip=\fontdimen2\font plus
\BIBentryALTinterwordstretchfactor\fontdimen3\font minus
  \fontdimen4\font\relax}
\providecommand{\BIBforeignlanguage}[2]{{%
\expandafter\ifx\csname l@#1\endcsname\relax
\typeout{** WARNING: IEEEtran.bst: No hyphenation pattern has been}%
\typeout{** loaded for the language `#1'. Using the pattern for}%
\typeout{** the default language instead.}%
\else
\language=\csname l@#1\endcsname
\fi
#2}}
\providecommand{\BIBdecl}{\relax}
\BIBdecl

\bibitem{giordani2020toward}
M.~Giordani, M.~Polese, M.~Mezzavilla, S.~Rangan, and M.~Zorzi, ``{Toward 6G
  networks: Use cases and technologies},'' \emph{{IEEE} Commun. Mag.}, vol.~58,
  no.~3, pp. 55--61, 2020.

\bibitem{zhang2018multibeam}
J.~A. Zhang, X.~Huang, Y.~J. Guo, J.~Yuan, and R.~W. Heath, ``Multibeam for
  joint communication and radar sensing using steerable analog antenna
  arrays,'' \emph{{IEEE} Trans. Veh. Technol.}, vol.~68, no.~1, pp. 671--685,
  2018.

\bibitem{molisch2017hybrid}
A.~F. Molisch, V.~V. Ratnam, S.~Han, Z.~Li, S.~L.~H. Nguyen, L.~Li, and
  K.~Haneda, ``Hybrid beamforming for massive {MIMO}: A survey,'' \emph{{IEEE}
  Commun. Mag.}, vol.~55, no.~9, pp. 134--141, 2017.

\bibitem{qi2022hybrid}
C.~Qi, W.~Ci, J.~Zhang, and X.~You, ``Hybrid beamforming for millimeter wave
  {MIMO} integrated sensing and communications,'' \emph{{IEEE} Commun. Lett.},
  vol.~26, no.~5, pp. 1136--1140, 2022.

\bibitem{wang2022partially}
X.~Wang, Z.~Fei, J.~A. Zhang, and J.~Xu, ``Partially-connected hybrid
  beamforming design for integrated sensing and communication systems,''
  \emph{{IEEE} Trans. Commun.}, vol.~70, no.~10, pp. 6648--6660, 2022.

\bibitem{liyanaarachchi2021joint}
S.~D. Liyanaarachchi, C.~B. Barneto, T.~Riihonen, M.~Heino, and M.~Valkama,
  ``Joint multi-user communication and {MIMO} radar through full-duplex hybrid
  beamforming,'' in \emph{IEEE Int. Online Symposium Joint Commun. \& Sensing
  (JC\&S)}, 2021.

\bibitem{barneto2021beamformer}
C.~B. Barneto \emph{et~al.}, ``Beamformer design and optimization for joint
  communication and full-duplex sensing at mm-{W}aves,'' \emph{IEEE Trans.
  Commun.}, vol.~70, no.~12, pp. 8298--8312, Dec. 2022.

\bibitem{cheng2022QoS}
Z.~Cheng and B.~Liao, ``{Q}o{S}-aware hybrid beamforming and {DOA} estimation
  in multi-carrier dual-function radar-communication systems,'' \emph{IEEE J.
  Select. Areas in Commun.}, vol.~40, no.~6, pp. 1890--1905, Sept. 2022.

\bibitem{cheng2021hybrid_narrow}
Z.~Cheng, Z.~He, and B.~Liao, ``Hybrid beamforming for multi-carrier
  dual-function radar-communication system,'' \emph{{IEEE} Trans. on Cogn.
  Commun. Netw.}, vol.~7, no.~3, pp. 1002--1015, 2021.

\bibitem{wang2022HBD_OFDM}
B.~Wang, Z.~Cheng, L.~Wu, and Z.~He, ``Hybrid beamforming design for {OFDM}
  dual-function radar-communication system with double-phase-shifter
  structure,'' in \emph{Proc. European Signal Process. Conf.}, Belgrade,
  Serbia, Aug. 2022, pp. 1067--1071.

\bibitem{islam2022integrated}
M.~A. Islam, G.~C. Alexandropoulos, and B.~Smida, ``Integrated sensing and
  communication with millimeter wave full duplex hybrid beamforming,'' in
  \emph{Proc. IEEE Int. Conf. Commun.}, 2022, pp. 4673--4678.

\bibitem{Elbir2021HB_THz}
A.~M. Elbir, K.~V. Mishra, and S.~Chatzinotas, ``Hybrid beamforming for
  {T}erahertz joint ultra-massive {MIMO} radar-communications,'' in \emph{Proc.
  Int. Symp. on Wireless Commun. Systems}, Berlin, Germany, Sept. 2021.

\bibitem{liu2019hybrid}
F.~Liu and C.~Masouros, ``{Hybrid beamforming with sub-arrayed MIMO radar:
  Enabling joint sensing and communication at mmWave band},'' in \emph{Proc.
  IEEE Int. Conf. on Acoustics, Speech and Signal Process.}, Brighton, UK, May
  2019, pp. 7770--7774.

\bibitem{Kaushik2021Hardware}
A.~Kaushik, C.~Masouros, and F.~Liu, ``Hardware efficient joint
  radar-communications with hybrid precoding and {RF} chain optimization,'' in
  \emph{Proc. IEEE Int. Conf. Commun.}, Montreal, QC, Canada, June 2021.

\bibitem{Kaushik2022Green}
A.~Kaushik, E.~Vlachos, C.~Masouros, C.~Tsinos, and J.~Thompson, ``Green joint
  radar-communications: {RF} selection with low resolution {DAC}s and hybrid
  precoding,'' in \emph{Proc. IEEE Int. Conf. Commun.}, Seoul, Republic of
  Korea, May 2022, pp. 3160--3165.

\bibitem{Nguyen_ISAC_TSP}
N.~T. Nguyen, N.~Shlezinger, Y.~C. Eldar, and M.~Juntti, ``Multiuser {MIMO}
  wideband joint communication and sensing system with subcarrier allocation,''
  \emph{{IEEE} Trans. Signal Process.}, vol.~71, pp. 2997--3013, 2023.

\bibitem{Nguyen_ML_ISAC_JSTSP}
N.~T. Nguyen \emph{et~al.}, ``Joint communication and sensing hybrid
  beamforming design via deep unfolding,'' \emph{{IEEE} J. Sel. Topics Signal
  Process.}, 2024, early access.

\bibitem{cheng2021hybrid}
Z.~Cheng, Z.~He, and B.~Liao, ``Hybrid beamforming design for {OFDM}
  dual-function radar-communication system,'' \emph{{IEEE} J. Sel. Topics
  Signal Process.}, vol.~15, no.~6, pp. 1455--1467, 2021.

\bibitem{shlezinger2023AI}
N.~Shlezinger, M.~Ma, O.~Lavi, N.~T. Nguyen, Y.~C. Eldar, and M.~Juntti,
  ``{AI}-empowered hybrid {MIMO} beamforming,'' \emph{{IEEE} Veh. Technol.
  Mag.}, 2024, early access.

\bibitem{Mateos2022EndtoEnd}
J.~M. Mateos-Ramos, J.~Song, Y.~Wu, C.~Häger, M.~F. Keskin, V.~Yajnanarayana,
  and H.~Wymeersch, ``End-to-end learning for integrated sensing and
  communication,'' in \emph{Proc. IEEE Int. Conf. Commun.}, Seoul, Republic of
  Korea, May 2022, pp. 1942--1947.

\bibitem{muth2023Autoencoder}
C.~Muth and L.~Schmalen, ``Autoencoder-based joint communication and sensing of
  multiple targets,'' in \emph{Proc. Int. ITG Workshop on Smart Antennas and
  Conf. on Systems, Commun., and Coding}, Feb. 2023.

\bibitem{xu2022deep}
L.~Xu, R.~Zheng, and S.~Sun, ``A deep reinforcement learning approach for
  integrated automotive radar sensing and communication,'' in \emph{Proc. IEEE
  Sensor Array and Multichannel Sign. Proc. Workshop}, 2022, pp. 316--320.

\bibitem{elbir2021terahertz}
A.~M. Elbir, K.~V. Mishra, and S.~Chatzinotas, ``Terahertz-band joint
  ultra-massive {MIMO} radar-communications: {M}odel-based and model-free
  hybrid beamforming,'' \emph{{IEEE} J. Sel. Topics Signal Process.}, vol.~15,
  no.~6, pp. 1468--1483, 2021.

\bibitem{nguyen2023deep}
N.~T. Nguyen, M.~Ma, O.~Lavi, N.~Shlezinger, Y.~C. Eldar, A.~L. Swindlehurst,
  and M.~Juntti, ``Deep unfolding hybrid beamforming designs for {THz} massive
  {MIMO} systems,'' \emph{{IEEE} Trans. Signal Process.}, vol.~71, pp.
  3788--3804, 2023.

\bibitem{yu2016alternating}
X.~Yu, J.-C. Shen, J.~Zhang, and K.~B. Letaief, ``Alternating minimization
  algorithms for hybrid precoding in millimeter wave {MIMO} systems,''
  \emph{{IEEE} J. Sel. Topics Signal Process.}, vol.~10, no.~3, pp. 485--500,
  2016.

\bibitem{sohrabi2016hybrid}
F.~Sohrabi and W.~Yu, ``Hybrid digital and analog beamforming design for
  large-scale antenna arrays,'' \emph{{IEEE} J. Sel. Topics Signal Process.},
  vol.~10, no.~3, 2016.

\bibitem{nguyen2019unequally}
N.~T. Nguyen and K.~Lee, ``Unequally sub-connected architecture for hybrid
  beamforming in massive {MIMO} systems,'' \emph{{IEEE} Trans. Wireless
  Commun.}, vol.~19, no.~2, pp. 1127--1140, 2019.

\bibitem{liu2018mu}
F.~Liu, C.~Masouros, A.~Li, H.~Sun, and L.~Hanzo, ``{MU-MIMO communications
  with MIMO radar: From co-existence to joint transmission},'' \emph{{IEEE}
  Trans. Wireless Commun.}, vol.~17, no.~4, pp. 2755--2770, 2018.

\bibitem{liu2018toward}
F.~Liu, L.~Zhou, C.~Masouros, A.~Li, W.~Luo, and A.~Petropulu, ``Toward
  dual-functional radar-communication systems: Optimal waveform design,''
  \emph{{IEEE} Trans. Signal Process.}, vol.~66, no.~16, pp. 4264--4279, 2018.

\bibitem{nguyen2023fast}
N.~T. Nguyen \emph{et~al.}, ``Fast deep unfolded hybrid beamforming in
  multiuser large {MIMO} systems,'' in \emph{Proc. Annual Asilomar Conf.
  Signals, Syst., Comp.}, 2023, pp. 486--490.

\bibitem{agiv2022learn}
O.~Agiv and N.~Shlezinger, ``Learn to rapidly optimize hybrid precoding,'' in
  \emph{Proc. IEEE Works. on Sign. Proc. Adv. in Wirel. Comms.}, 2022.

\bibitem{tran2012fast}
L.-N. Tran, M.~F. Hanif, A.~Tolli, and M.~Juntti, ``Fast converging algorithm
  for weighted sum rate maximization in multicell {MISO} downlink,''
  \emph{{IEEE} Signal Process. Lett.}, vol.~19, no.~12, pp. 872--875, 2012.

\bibitem{zhang2018performance}
C.~Zhang, Y.~Jing, Y.~Huang, and L.~Yang, ``Performance analysis for massive
  {MIMO} downlink with low complexity approximate zero-forcing precoding,''
  \emph{{IEEE} Trans. Commun.}, vol.~66, no.~9, pp. 3848--3864, 2018.

\bibitem{petersen2008matrix}
K.~B. Petersen, M.~S. Pedersen \emph{et~al.}, ``The matrix cookbook,''
  \emph{Technical University of Denmark}, vol.~7, no.~15, p. 510, 2008.

\bibitem{hjorungnes2007complex}
A.~Hjorungnes and D.~Gesbert, ``Complex-valued matrix differentiation:
  {T}echniques and key results,'' \emph{{IEEE} Trans. Signal Process.},
  vol.~55, no.~6, pp. 2740--2746, 2007.

\end{thebibliography}
\end{document}